\newcommand{\X}{\mathbf{X}} 
\newcommand{\Y}{\mathbf{Y}}
\newcommand{\A}{\mathbf{A}}
\newcommand{\R}{\mathbf{R}}
\newcommand{\B}{\mathbf{B}}
\newcommand{\C}{\mathbf{C}}
\newcommand{\F}{\mathcal{F}}
\newcommand{\SNR}{\mbox{SNR}}
\newcommand{\I}{\mathbb{I}}
\def\spn{\mathop{\mathrm{span}}}
\def\diag{\mathop{\mathrm{diag}}}
\newcommand*{\tm}[1]{{\scriptstyle[#1]}}
\newtheorem{theorem}{Theorem}[section]
\begin{document}
%
\title{Achievable Degrees of Freedom in MIMO Correlatively Changing Fading Channels}


\author{
  \IEEEauthorblockN{Mina Karzand}
  \IEEEauthorblockA{Massachusetts Institute of Technology\\
    Cambridge, USA\\
    Email: mkarzand@mit.edu} 
  \and
  \IEEEauthorblockN{Lizhong Zheng}
  \IEEEauthorblockA{Massachusetts Institute of Technology\\
    Cambridge, USA\\
    Email: lizhong@mit.edu}
}


%


\maketitle

\begin{abstract}
The relationship between the transmitted signal and the noiseless received signals in correlatively changing fading channels is modeled as a nonlinear mapping over manifolds of different dimensions. Dimension counting argument claims that the dimensionality of the neighborhood in which this mapping is bijective with probability one is achievable as the degrees of freedom of the system. We call the degrees of freedom achieved by the nonlinear decoding methods the \emph{nonlinear degrees of freedom}.
\end{abstract}



%
\IEEEpeerreviewmaketitle

\section{Introduction}
The noncoherent systems in which neither the transmitter nor the receiver know the fading coefficients of the wireless channel are of great importance. 

The dependency of capacity of fading channels with additive white gaussian noise on the $\SNR$ is a critical measure in designing the communication systems. In low $\SNR$ regime, the capacity generally increases linearly with $\SNR$; whereas in high $\SNR$ regime, the dependency of the capacity to $\SNR$ is logarithmically. The degrees of freedom (DOF) of channel is defined as  the pre-log factor of the capacity in high SNR regime.

The classic approach towards the achievability of DOF in noncoherent systems exploits a set of training signals in a block of time to estimate all the unknown parameters of the channel. In the transmission phase of communications, the message is recovered at the receiver using the estimated fading coefficients.

In this paper, the dimension counting argument in an achievability scheme of degrees of freedom of fading channels is formalized. The geometric intuition behind dimension counting argument counts the dimensionality of the subspace in the output signal space which is reachable only by changing the transmitted signal. 

Theorem \ref{thm:DOF} formalizes this intuition. In this theorem, we consider a mapping from a $D$ dimensional neighborhood in the input signal space to the noiseless output signal space. It is proved that if this mapping is bijective with probability one in this neighborhood, then $D$ degrees of freedom is achievable in this system.  This approach is used to propose an achievability scheme for degrees of freedom of correlatively changing channels. 

We consider the multiple-input multiple-output (MIMO) systems with $n_t$ transmit antennas and $n_r$ receive antennas. In correlatively changing channels, the fading coefficients are time varying. But this variation in time is not stochastically independent. In a block of length $T$, there are only $Q<T$ statistically independent elements which determine the fading coefficients in this block. This implies a low rank correlation matrix over the fading coefficients. Whitening process of the fading coefficients uses correlation matrix $\mathbf{K}_{\mathbf{H}}=\A^{\dagger}\A$. Matrix $\A$ of size $Q \times T$ gives the linear equations determining the fading coefficients in a block of length $T$ from the $Q$ independent elements. This matrix plays an important role in the achievability conditions of our proposed scheme. This channel model was first introduced and discussed in \cite{Liang}.

Having the classical training approach in mind, one might try to estimate the unknown parameters of the channel in each block and then communicate the message knowing the fading coefficients. This is the approach taken in \cite{Liang} where it is proved that this is the optimal strategy in terms of the achievable DOF in single-input single-output (SISO) systems. In a block of length $T$, $Q$ symbols are assigned to gather information about the fading coefficients in the training phase and $T-Q$ symbols are used to convey the message in the transmission phase. Thus there are  $(1-Q/T)$ degrees of freedom per transmitted symbol.

In the same paper, there is a conjecture about the MIMO systems which states that if $n_{t} \leq \min\{{n_{r}},T/2\}$, the pre-log factor of the system is $n_{t}(1-n_{t}Q/T)$. In a block of length $T$, there are $n_{t}^2 Q$ independent unknown elements which describe the fading coefficients in this block. Thus, $n_{t}^2Q$ symbols are assigned to gather information about the fading coefficients and $n_{t} T-n_{t}^2Q$ symbols are used to transmit information. The loss in the number of DOF due to the training is $n_t^2 Q$ which is proportional to the rank of the correlation matrix in this case. This conjecture is proved to be wrong for single-input multiple output (SIMO) systems in \cite{Riegler}. We prove in \cite{Karzand} that this conjecture is not true in general for MIMO systems either and strictly higher number of DOF can be achieved using nonlinear decoding algorithms.

In \cite{Riegler} and \cite{Riegler2}, SIMO systems are studied. Hironaka's theorem on resolution of singularities in algebraic geometry is used to prove that the pre-log factor of $(1-1/T)$ is achievable as long as $T>2Q-1$ under some constraints over the correlation matrix of the fading coefficients. It is proved that the lower and upper bounds for the DOF of correlatively changing fading SIMO channels match. The number of DOF for these systems is $\min\left[1-1/T,n_r(1-Q/T)\right]$.

In \cite{Riegler3} the MIMO channels under this statistics is studied and a lower bound on DOF is presented.

 In this paper, we present an achievability scheme for MIMO systems. This scheme uses dimension counting argument as presented in theorem \ref{thm:DOF}. We model the transformation from the transmitted signals to the noiseless received signals as  a nonlinear mapping over the manifolds of different dimensions. This mapping is studied and proved to be bijective with probability one over a neighborhood in input signal space. The dimensionality of this space can be achieved as number of degrees of freedom.  

If the transmitted signals span an $M$ dimensional linear subspace in $\mathcal{C}^T$, the received signals live in an $MQ$ dimensional linear subspace denoted by $\F_{\A}(\X)$. 
In the regime where $n_r \geq MQ$ and $T \geq M(Q+1)$, this mapping is bijective with probability one over the input signal space defined as all the $M$ dimensional linear subspaces in $\mathcal{C}^T$. We define $M^*=\min\left[n_t,\lfloor{\frac{n_r}{Q}}\rfloor,\lfloor\frac{T}{Q+1}\rfloor\right].$ Thus,  $M^*(T-M^*)$ degrees of freedom are achievable in this system. 

\section{Dimension Counting Argument}



To understand the geometric interpretation of dimension counting argument, the local behavior of the signals should be studied. Assume that the transmitted signal in a block of time is a specific realization of input signal in the input signal space, $\X_0 \in \mathcal{X}$. Also assume that channel operates based on a specific realization of the fading coefficient $\mathbf{H}_0$ and noise $\mathbf{W}_0$. Clearly, the underline probability distribution which generates the fading coefficients $\mathbf{H}_0$ in a block of time satisfies the dependency constraints imposed by the given channel model. These realizations of transmitted signal and the channel parameters imply a noisy received signal living in the output signal space, $\Y_{0,\mbox{noisy}}\in \mathcal{Y}$. 

Obviously, having fixed the transmitted signal and the realization of the fading coefficients of the channel as the realization of the additive noise changes, the noisy received signal can move along all the possible dimensions of the output signal space. This local behavior of the received signal is not of interest in dimension counting argument. 

Define $\Y$ as the noiseless received signal. The initial realization of $\X_0$ and $\mathbf{H}_0$ implies the noiseless received signal $\Y_0$. Altering the realization of $\mathbf{H}$ and  $\X$ from $\mathbf{H}_0$ and $\X_0$ locally, we can move $\Y$ in a neighborhood around $\Y_{0}$. This neighborhood around $\Y_{0}$ is a subset of the output signal space $\mathcal{Y}$ which was reachable by noisy received signal.

To find the number of DOF, we categorize the dimensions of the neighborhood reachable in \emph{noiseless received signal} into three categories:

\begin{enumerate}
		\item The dimensions of the neighborhood around $\Y_0$ that is reachable only by altering transmit signal $\X$ from $\X_0$.
	\item The dimensions of the neighborhood around $\Y_0$ that is reachable only by altering the realization of $\mathbf{H}$ from $\mathbf{H}_0$. 
	\item The dimensions of the neighborhood around $\Y_0$ that is reachable by altering transmit signal $\X$ from $\X_0$ or changing $\mathbf{H}$ from $\mathbf{H}_0$. 
\end{enumerate}

Theorem~\ref{thm:DOF} describes how in the high SNR regime communications can take place along the dimensions in the first category. If the dimensionality of the space in the first category is $D$, this theorem proposes the coding scheme achieving $D$ degrees of freedom.

\begin{theorem}[Achievability of Degrees of Freedom]
	\label{thm:DOF}
Let's model the communication system as a continuous mapping which transforms the transmitted signal at the input signal space to the noiseless received signal at the output signal space. Assume that there exists a $D$ dimensional neighborhood in the input signal space which is mapped to a $D$ dimensional neighborhood in the output signal space. If the mapping from transmitted signals to the noiseless received signal is bijective with measure one with respect to Lebesgue measure over this neighborhood, then the DOF of $D$ is achievable in this communication system.
\end{theorem}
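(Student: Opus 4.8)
The plan is to prove the claim constructively: I will exhibit a transmission scheme supported on the given neighborhood and show that the number of messages it carries reliably grows like $\SNR^{D}$, so that the pre-log factor of the rate, and hence the achievable DOF, is $D$. Model the channel as $\Y_{\mathrm{noisy}}=f(\X)+\mathbf{W}$, where $f$ is the continuous input-to-noiseless-output map of the hypothesis and $\mathbf{W}$ is additive noise whose per-dimension variance scales as $1/\SNR$. Restricting to the $D$-dimensional input neighborhood $U$ on which $f$ is bijective with measure one, the scheme is a sphere-packing code: lay down a codebook inside $U$, transmit a codeword, and decode by minimum distance to $\Y_{\mathrm{noisy}}$. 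The whole argument then reduces to counting how many codewords can be placed so that their noiseless images $f(\X)$ stay reliably separated against noise of scale $\SNR^{-1/2}$.

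The crucial preliminary, and the step I expect to be the main obstacle, is to upgrade the qualitative hypothesis ``bijective with measure one'' into a quantitative separation estimate, since a mere continuous bijection can contract distances arbitrarily and would not, by itself, force the codeword count to carry the exponent $D$. To obtain the estimate I would first delete the measure-zero exceptional set and pass to a compact subset $K\subset U$ of positive $D$-dimensional Lebesgue measure on which $f$ is injective and continuous, hence a homeomorphism onto $f(K)$. Because $f$ is a bijection between two neighborhoods of the \emph{same} dimension $D$, the area formula shows that the set where its differential is singular contributes no volume to $f(K)$; combined with almost-everywhere differentiability this forces $Df$ to have full rank $D$ on a positive-measure subset. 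Localizing to a small ball on which $f$ is $C^1$-close to its nonsingular linearization then yields a bi-Lipschitz lower bound $\|f(\X)-f(\X')\|\ge c\,\|\X-\X'\|$ for $\X,\X'$ in that ball and some constant $c>0$. This inequality is precisely what transports input separation to output separation with the correct \emph{linear} scaling; without it the exponent in the codeword count could degrade.

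Granting this bound, the codebook construction and counting are routine. I would place codewords on a grid of spacing $\delta=\SNR^{-1/2+\epsilon}$ inside the ball, for a small $\epsilon>0$, slightly coarser than the noise scale. Since the $D$-dimensional neighborhood has real dimension $2D$ in the complex signal model used here, the number of codewords is $N\asymp\mathrm{Vol}\,\delta^{-2D}\asymp\SNR^{(1-2\epsilon)D}$, so that $\log N=(1-2\epsilon)D\log\SNR$.

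Finally I would bound the error probability of the minimum-distance decoder. By the bi-Lipschitz bound the images of any two codewords are separated by at least $c\delta$, so the relevant pairwise error event depends only on the one-dimensional projection of $\mathbf{W}$ onto the segment joining the two images, which has standard deviation $\asymp\SNR^{-1/2}$; thus the ambient dimensionality of the output and the noise components orthogonal to the signal set are harmless. The pairwise error probability is a Gaussian tail of argument $\asymp c\,\delta\,\SNR^{1/2}=c\,\SNR^{\epsilon}\to\infty$, decaying faster than any polynomial, so a union bound over the $N\asymp\SNR^{(1-2\epsilon)D}$ codewords still drives the total error to zero. Letting $\epsilon\to 0$ then recovers $\log N=(D-o(1))\log\SNR$ with vanishing error, which is exactly the statement that $D$ degrees of freedom are achievable.
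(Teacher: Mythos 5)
Your proposal is correct and its overall architecture matches the paper's proof: convert the almost-everywhere bijectivity into a quantitative lower bound transporting input separation to output separation, place a grid (QAM) codebook at spacing slightly coarser than the noise scale $\SNR^{-1/2}$, decode by minimum distance, and count roughly $\SNR^{(1-o(1))D}$ codewords, handling the complex-versus-real factor of $2$ identically. The genuine difference is in the key step, which you rightly single out as the crux. The paper argues that a.e. injectivity forces the eigenvalues of the Jacobian to be strictly positive a.e., then invokes a continuity-of-measure argument to extract a single constant $\sigma_0>0$ bounding the minimum eigenvalue from below on a subset of the neighborhood of measure $1-\epsilon$; the codebook then occupies essentially the whole neighborhood and the rate only loses a factor $(1-\epsilon)$. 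You instead use the area formula and a.e. differentiability to produce one point of full-rank differential and then localize to a small ball on which $f$ is uniformly bi-Lipschitz. Your route is easier to make rigorous (a uniform constant on a fixed compact ball via the inverse-function-theorem mechanism, rather than a measure-theoretic uniformization whose interaction with the discrete grid of codewords the paper never spells out), at the price of discarding all but a constant fraction of the volume --- harmless, since volume affects only the $O(1)$ term of the rate, not the pre-log. You are also more explicit about the error analysis: super-polynomially decaying Gaussian tails against polynomially many codewords in the union bound, where the paper simply asserts that error vanishes once $d_{\text{min,y}} \gg \SNR^{-1/2}$. One caveat applies equally to both arguments: the theorem assumes only continuity plus a.e. bijectivity, which by itself supports neither a Jacobian (paper) nor the area formula and a.e. differentiability (you); some Lipschitz or $C^1$ regularity must be implicitly assumed, and your opening observation that a mere continuous a.e.-bijection can contract distances arbitrarily is precisely the reason why.
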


\begin{proof}
	To prove the achievability of $D$ degrees of freedom in the communication system with described properties, we propose a coding scheme achieving the rate $R \approx D \log{\mbox{SNR}} + o(\log{\mbox{SNR}})$.

To achieve the $D$ degrees of freedom in this neighborhood, QAM modulation is performed in each of the $D$ dimensions of input space which is conserved in output space. Define $d_{\text{min,x}}$ to be the  minimum distance of the codewords in input space. In the communication channel with signal to noise ratio $SNR$, we can assume the input power constraint implies $\mathbb{E}[\| x \|^2] \leq 1$ and stationary noise has power spectral density $1/\text{SNR}$.

	Define $d_{\text{min,y}}$ as the minimum distance between noiseless received codewords in the output space. Since the mapping is one-to-one with probability one in this space, the eigenvalues of the Jacobian of this mapping is strictly positive with probability one. Using the continuity of measure lemma we can prove that for any $\epsilon>0$, there exists $\sigma_0>0$ such that the minimum eigenvalue of the Jacobian of the mapping is greater than $\sigma_0$ with probability $1-\epsilon$. Thus, with probability $1-\epsilon$ we would have $d_{\text{min,y}}\geq d_{\text{min,x}} \sigma_0$. 

	In the fading channel with the Rayleigh fading coefficients and noise power density $1/\text{SNR}$, the probability of error vanishes as long as $d_{\text{min,y}} \gg 1/{\sqrt{\text{SNR}}}$. Thus, the probability of error vanishes as long as $d_{\text{min,x}} \sigma_0 \gg 1/{\sqrt{\text{SNR}}}$. Setting $d_{\text{min,x}}=\frac{1}{\sigma_0 \text{SNR}^{1/2-\delta}}$ for any $\delta>0$, the probability of error vanishes. The power constraint implies that in each dimension, QAM would give $ (2/d_{\text{min,x}})^2$ codewords. Thus, with probability $1-\epsilon $ the total number of codewords would be $(2/d_{\text{min,x}})^{2D}=(2 \sigma_0 \text{SNR}^{1/2-\delta})^{2D}$ and the achievable rate is $(1-\epsilon)(1-2\delta)D\log(SNR)+o(\log SNR)$.
\end{proof}

In this paper, to prove the achievability of degrees of freedom in correlatively changing fading channels the following scenario is proposed: a $D$ dimensional subspace of the input signal space is chosen. The mapping from this input subspace to the noiseless received signal in the output subspace is studied. If this mapping is bijective with probability one, $D$ degrees of freedom is achievable. To prove the bijective property of the mapping, a decoding algorithm is proposed which recovers the transmitted signal from the received noiseless signal with probability one. 

Note that constraining the transmitted signal to live in the $D$ dimensional subspace corresponds to transmission of training signal. One example of the training scheme which makes the decodability possible in each case is explained. Also note that the decodability is possible under generic parameters of the channel. There are non-generic cases in which the proposed decoding algorithm is not successful with probability one. Clearly, this does not mean that the $D$ degrees of freedom are not achievable. But since we are studying the achievability of the DOF, we are going to give the generic conditions under which the proposed decoding algorithm corresponding to the proposed training scheme succeeds with probability one. These conditions would be interpreted as a set of sufficient but not necessary conditions for achievability of $D$ DOF. These generic situations are described as the a set of conditions over the parameters of the channel. We call them the recovery conditions. 

\section{Notations and Channel Model}

We study the multiple antenna systems with $n_t$ transmit antennas, $n_r$ receive antennas, Rayleigh fading coefficients between pairs of transmitters and receiver, additive white gaussian noise and received signal to noise ratio $\text{SNR}$.

The transmitted signal from $m$th transmit antenna at time $t$ is denoted by $x_{m}\tm{t}\in \mathcal{C}$. The matrix $\X \in \mathcal{C}^{n_t \times T}$ contains all the transmitted signals in a block of length $T$. Its $m$th row, $\mathbf{x}_m\in \mathcal{C}^{1 \times T}$, is the transmitted signal from the $m$th antenna in a block of length $T$. The $t$th column of matrix $\X$, $\mathbf{x}\tm{t}\in \mathcal{C}^{n_t \times 1}$, is the transmitted signal from all $n_t$ antennas at time $t$. Similarly, $y_{n}\tm{t}\in \mathcal{C}$ is the noiseless received signal from $n$th antenna at time $t$ and matrix $\Y\in \mathcal{C}^{n_r \times T}$ contains the noiseless received signals from $n_r$ antennas at a block of length $T$. Its $n$th row and $t$th column are denoted by $\mathbf{y}_n$ and $\mathbf{y}\tm{t}$. The noisy received signal from $n$th antenna at time $t$ is denoted by $y_{\text{noisy},n}\tm{t}$.

In the time varying model of the MIMO fading channel, the matrix $\mathbf{H}\tm{t}\in \mathcal{C}^{n_r\times n_t}$ contains the fading coefficients which determine the relationship between the transmitted signal and the  noiseless received signal at time $t$:

\[\mathbf{y}\tm{t}= \mathbf{H}\tm{t} \mathbf{x}\tm{t}.\]

The fading coefficient between $m$th transmit antenna and $n$th receive antenna, $h_{m,n}\tm{t}$, has normal complex gaussian distribution for $m=1,\cdots, n_{t} \quad  \mbox{and} \quad n=1,\cdots,n_{r}.$ The fading coefficients between different pairs of transmitters and receivers are independent of each other.


The noisy received signal at the receiver is $\Y_{\text{noisy}}=\Y+\mathbf{W}$ where $\mathbf{W} \in \mathcal{C}^{n_r \times T}$ is the random IID complex gaussian noise at the receiver.

The block fading model with block length $T$ is assumed.  In the model of channel of interest, in a block of time the fading coefficients change correlatively over time. The correlation matrix of fading coefficients between a pair of transmitter and receiver in a block of length $T$, $K_{\mathbf{H}}$, is of rank $Q<T$. 
Equivalently, all the fading coefficients between a pair of transmitter and receiver in a block of length $T$ are linear combinations of $Q$ statistically independent elements.  The case $Q=1$ corresponds to the block fading model in which the fading coefficients do not change in the block of length $T$. This is  the case studied in depth in \cite{Zheng}.

Define $\underline{\mathbf{h}}_{m,n}=[h_{m,n}\tm{1},\cdots,h_{m,n}\tm{T}]$  and $K_{\mathbf{H}}=\mathbb{E}[\underline{\mathbf{h}}^{\dagger}_{m,n} \underline{\mathbf{h}}_{m,n}] = \A^{\dagger} \A $.
The vector  $\underline{\mathbf{s}}_{m,n}=[s^1_{m,n},\cdots, s^Q_{m,n}]$ contains the $Q$ statistically independent elements whose linear combinations give the elements of vector $\underline{\mathbf{h}}_{m,n}$.
The matrix $\A\in \mathcal{C}^{Q \times T}$, known at both Tx and Rx, gives the linear equations which specify the fading coefficients $h_{m,n}\tm{t}$ from the independent, gaussian distributed numbers $\underline{\mathbf{s}}_{m,n}$ as follows:

\begin{equation}
	\label{eq:whitening} h_{m,n}\tm{t}=\sum_{q=1}^Q A^q\tm{t} s^q_{m,n}.
\end{equation}

Define $A^q\tm{t}$ to be the element in $q$th row and $t$th column of matrix $\A$ and $\underline{\A}^q$ to be the $q$th row of matrix $\A$

The noiseless received signal at the $n$th antenna would be:

\begin{equation}
	\label{eq:noiseless}
	\mathbf{y}_n=\sum_{m=1}^{n_t}\sum_{q=1}^{Q}s^q_{m,n} \underline{\A}^q \diag(\mathbf{x}_m),
	\end{equation}

where $\diag(\mathbf{x}_m)$ is a $T\times T$ matrix whose diagonal elements are the the elements of $\mathbf{x}_m$.

We observe that for all $n=1,\cdots,n_r$ the noiseless received signal at the $n$th antenna, $\mathbf{y}_n$, lives in the following subspace:

\begin{eqnarray}
\label{eq:Fdef}	
	\F_{\A}(\X) = \spn \{ \, \underline{\A}^q \diag(\mathbf{x}_m) & \mbox{ for  $q=1,\dots,Q$} \nonumber \\
	 &\mbox{ and $ m=1,\cdots,{n_{t}}$}\}.
\end{eqnarray}

$\F_{\A}(\X)$ is the output of a nonlinear transform which is a mapping from the linear subspace spanned by rows of matrix $\X$, $\Omega_{\X}$, to a higher dimensional subspace; parameterized by the matrix $\A$. 

\subsection{Reduced Row Echelon Form (RREF)}

\label{sec:RREF}
Since we are going to study and analyze a nonlinear mapping over the manifolds of different dimensions, we define a canonical representation of linear subspaces. Accordingly, the mapping over the manifolds can be represented as the mapping over the nontrivial elements of the canonical representations of the manifolds. 
Following the notation in \cite{Zheng}, each linear subspace of dimension $L$ in $\mathcal{C}^T$ can be represented as span of $L$ linearly independent vectors in the rows of a matrix  $\R \in \mathcal{C}^{L\times T}$. The same subspace is represented by choosing any non-singular matrix $\mathbf{C}\in \mathcal{C}^{L\times L}$ and constructing matrix $\B \in \mathcal{C}^{L\times T}$ such that $\R=\mathbf{C}\B$. The matrix $\mathbf{C}_{\R}$ can be chosen such that $\B\tm{1:L}=\I_{L}$, where $\B\tm{1:L}\in\mathcal{C}^{L \times L}$ is the submatrix of first $L$ columns of matrix $\B$ and $\I_{L}$ is the identity matrix of size $L$.



We call matrix $\B$ \emph{the reduced row echelon form} of this linear subspace of dimension $L$ in $\mathcal{C}^T$. Choosing $\mathbf{C}_{\R}=\R\tm{1:L}$ as the first $L$ columns of matrix $\R$, we construct $\B= \C^{-1}_{\R} \R$.

Note that it can be proved that for all the subspaces defined in this paper, the matrix $\C_{\R}$ is nonsingular with probability one. 


\section{$ n_r \geq Q$}
\subsection{Mapping over the Manifolds}

As shown in equation~\eqref{eq:noiseless} the noiseless received signals live in the subspace $\F_{\A}(\X)$ defined in equation~\eqref{eq:Fdef}. $\F_{\A}(\X)$ is the output of a nonlinear mapping from $\Omega_{\X}$, i.e., the linear subspace spanned by the rows of matrix $\X$,  to a higher dimensional linear subspace. If matrix $\X$ spans an $M$ dimensional linear subspace in $\mathcal{C}^T$ then $\F_{\A}(\X)$ spans an $MQ$ dimensional linear subspace in $\mathcal{C}^T$.

Looking at equation~\eqref{eq:noiseless}, it is clear that if $n_r \geq MQ$, then $\Omega_{\Y}=\F_{\A}(\X)$ with probability one. In this regime, $\F_{\A}(\X)$ can be recovered from the noiseless received signals with probability one. The linear subspace defined by $\F_{\A}(\X)$ has dimension $MQ$, but not all $MQ(T-MQ)$ degrees of freedom of linear subspaces of dimension $MQ$ in $\mathcal{C}^T$ are reachable by changing $\X$. We want to find the number of dimensions of $\F_{\A}(\X)$ which are only reachable by changing $\X$. To do so, we use the reduced row echelon form of the input and output signals of this nonlinear mapping over the manifolds. Then we prove that under given constraints, this mapping is bijective with probability one over a neighborhood of dimension $D$ in the input signal space and output signal space. 
Then, the theorem \ref{thm:DOF} helps us prove the achievability of $D$ degrees of freedom in the system.
 
Since $\F_{\A}(\X)$ is a mapping from $\Omega_{\X}$ to a higher dimensional linear subspace, the message is transmitted through the linear subspace spanning the rows of matrix $\X$. Let's assume that the matrix $\X$ is spanning an $M$ dimensional linear subspace. By fixing any $M$ columns of matrix $\widetilde{\X}=[\mathbf{x}_1^T,\cdots,\mathbf{x}_M^T]^T$ and choosing the other rows of matrix $\X$ to live in the span of $\widetilde{\X}$  this linear subspace is specified uniquely. This process is equivalent to using $M$ transmit antennas.
The nonlinear phase of the algorithm is preformed for $t=M Q+1,\cdots,M(Q+1)$. We choose the training signal to have the form $\widetilde{\X}\tm{MQ+1:M(Q+1)}=\I_{M}$. Thus,

\begin{equation}
	\label{eq:training} 
	x_m\tm{MQ+n}=\delta_{n-m} \quad \mbox{for} \quad 1\leq n,m \leq M.
\end{equation}

The matrix $\R \in \mathcal{C}^{MQ \times T}$ whose rows span $\F_{\A}(\X)$ is 

\begin{equation}
	\label{eq:R-MIMO}
	\R=\begin{bmatrix}
\A \diag(\mathbf{x}_1)\\
\A \diag(\mathbf{x}_2)\\
\vdots\\
\A \diag(\mathbf{x}_{M})\\
\end{bmatrix}
\end{equation}

Defining $\A\tm{1:MQ}\in \mathcal{C}^{Q\times M Q}$ to be the first $MQ$ columns of matrix $\A$ and  ${\mathbf{x}}_m\tm{1:M Q}$ to be a vector of the first $MQ$ elements of $\mathbf{x}_m$, the first $MQ$ columns of matrix $\R$ would look like:

\[\mathbf{C}_{\R}= \R\tm{1:MQ} = \begin{bmatrix}
\A\tm{1:{M}Q} \diag(\mathbf{x}_1\tm{1:{M}Q})\\
\A\tm{1:{M}Q} \diag(\mathbf{x}_2\tm{1:{M}Q})\\
\vdots\\
\A\tm{1:{M}Q} \diag(\mathbf{x}_{M}\tm{1:{M}Q})\\
\end{bmatrix}.\]

Define $\B$ as the canonical representation of $\F_{\A}(\X)$ in reduced row echelon form. Since $\B={\mathbf{C}_{\R}}^{-1}\R$ we know that the $t$th column of $\R$ and $t$th column of $\B$ would have the relation $\underline{\R}\tm{t}={\mathbf{C}_{\R}} \underline{\B}\tm{t}$ for all $t={M}Q+1,\cdots,T$.

The $t$th column of matrix $\R$ as defined in~\eqref{eq:R-MIMO} would be a vector of size $MQ\times 1$ which consists of stacking of $M$ vectors of size $Q \times 1$ each of which is a factor of $\underline{\A}\tm{t}$. The $m$th stack of $\R\tm{t}$  for $m=1,\cdots,{M}$ and $t > MQ$ would be $\underline{\A}\tm{t} x_m\tm{t}$, we would know

\begin{eqnarray}\underline{\A}\tm{t} x_m\tm{t}  & = & \A\tm{1:{M}Q}\diag( \mathbf{x}_m\tm{1:M Q})\underline{\B}\tm{t} \notag \\
	 & = & \A\tm{1:{M}Q}\diag(\underline{\B}\tm{t}) \mathbf{x}^T_m\tm{1:M Q}. \label{eq:MIMO1}
\end{eqnarray}


We use a two phase decoding algorithm. It uses the training signals in the first nonlinear phase to get information about $\widetilde{\X}\tm{1:M Q}$. In the linear phase of the decoding algorithm, the estimation of $\widetilde{\X}\tm{1:M Q}$ is used to recover the remaining transmitted signals, $\widetilde{\X}\tm{t}$ for $t>M(Q+1)$.

Define matrix $\mathbf{J}$ as

\[\mathbf{J}=\begin{bmatrix}
\A\tm{1:{M}Q} \diag(\underline{\B}\tm{M Q+1})\\
\A\tm{1:{M}Q} \diag(\underline{\B}\tm{M Q+2})\\
\vdots\\
\A\tm{1:{M}Q} \diag(\underline{\B}\tm{M(Q+1)})\\
\end{bmatrix}.\]


For each fixed  $m$ we could form the following equation which stacks the vectors in equation~\eqref{eq:MIMO1} for all $M Q<t \leq M(Q+1)$ in a matrix:

\begin{equation}
\begin{bmatrix}
\underline{\A}\tm{M Q+1} x_m\tm{M Q+1}\\
\underline{\A}\tm{M Q+2} x_m\tm{M Q+2} \\
\vdots\\
\underline{\A}\tm{M(Q+1))} x_m\tm{M(Q+1))}\\
\end{bmatrix}= \mathbf{J} \mathbf{x}^T_m\tm{1:MQ}.
\end{equation}

Considering the training signals as defined in equation~\eqref{eq:training}, we could form the following equation to be used in the nonlinear phase of the recovery algorithm:

\begin{equation}
	\label{eq:MIMO-nonlinear}
\widetilde{\X}^T= \mathbf{J}^{-1}
 \begin{bmatrix}
 \underline{\A}\tm{M Q+1} & 0 & & 0\\
  0 & \underline{\A}\tm{M Q+2} & & 0  \\
& & \ddots &\\
 0 & 0 &   &\underline{\A}\tm{MQ+M} \\
 \end{bmatrix}
\end{equation}

\subsection{Recovery Algorithm}

In this section, we summarize the recovery algorithm which decodes the transmitted message from the noiseless received signal with probability one. The transmitted message is embedded in the form of the $M$ dimensional linear subspace spanned by row of matrix $\widetilde{\X}$. The remaining rows of matrix $\X$ are designed to be in spans of rows of $\widetilde{\X}$ and do not carry information. 

\begin{enumerate}
	\item Construct estimation of ${\F_{\A}}(\X)$ as an $MQ$ dimensional linear subspace in $\mathcal{C}^T$. The best estimation of this subspace is the span of the low rank approximation of $\Omega_{\Y_{\text{noisy}}}$ using principal component analysis methods..
	\item Construct matrix ${\B}$ as the canonical representation of ${\F}_{\A}(\X)$ in reduced row echelon form.
	\item Use equation \eqref{eq:MIMO-nonlinear} to recover $\hat{\mathbf{x}}_m\tm{1:{M}Q}$ for $m=1,\cdots,M$.
	\item To recover ${x}_m\tm{t}$ for $m=1,\cdots,{M}$ and  ${M}(Q+1) < t \leq T$ use:
	\[ {x}_m\tm{t} = \underline{\A}^1\tm{1:{M}Q}\diag({\underline{\B}}\tm{t}) {\mathbf{x}}^T_m\tm{1:{M}Q}/A^1\tm{t}.\] 
\end{enumerate}

\subsection{Recovery Conditions}
\begin{enumerate}
	\item $M \leq n_t$: $M$ as the effective number of transmit antennas should be less than available number of transmit antennas
	\item $MQ \leq n_r$: This conditions guarantees the recovery of $\F_{\A}(\X)$ from the received signals with probability one.
	\item $M(Q+1)\leq T$: To perform the nonlinear phase of the decoding algorithm, the block length should be long enough to transmit the training signals.
	\item $\det(\mathbf{J})\neq 0$ with probability one: This is the required condition for the nonlinear mapping to be bijective with probability one. In theorem \ref{thm:detJ} a set of sufficient conditions for the mapping to be bijective is given.
	\end{enumerate}

To satisfy the first three conditions we choose \[M^*=\min(n_t,\lfloor{\frac{n_r}{Q}}\rfloor,\lfloor\frac{T}{Q+1}\rfloor).\]

\begin{theorem}
	\label{thm:detJ}
	If any choice of $Q$ columns of matrix $\A\tm{1:M(Q+1)}$ are linearly independent of each other, then with probability one $\det(\mathbf{J})\neq 0$.
	\end{theorem}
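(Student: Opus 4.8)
The plan is to prove the statement in two stages: first reduce the almost-sure claim to a single non-degeneracy check, and then discharge that check using the general-position hypothesis on the columns of $\A$. The only source of randomness here is the transmitted message --- the free message block $\widetilde{\X}\tm{1:MQ}$, drawn from a continuous distribution --- while $\A$ and the training block $\widetilde{\X}\tm{MQ+1:M(Q+1)}=\I_M$ are fixed. Since $\underline{\B}\tm{t}=\mathbf{C}_{\R}^{-1}\underline{\R}\tm{t}$ with both $\mathbf{C}_{\R}$ and $\underline{\R}\tm{t}$ polynomial in the entries of $\X$, every $\underline{\B}\tm{t}$, hence every entry of $\mathbf{J}$ and $\det(\mathbf{J})$, is a rational function of the message whose denominator is a power of $\det(\mathbf{C}_{\R})$ (already nonzero with probability one). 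After clearing this denominator one obtains an honest polynomial, and a polynomial that is not identically zero vanishes only on a Lebesgue-null set. So it suffices to exhibit a single admissible input for which $\det(\mathbf{J})\neq0$.

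Next I would expose the column structure of $\mathbf{J}$. Writing the $j$-th entry of $\underline{\B}\tm{MQ+k}$ as a scalar weight $b_{k,j}$, the block $\A\tm{1:MQ}\diag(\underline{\B}\tm{MQ+k})$ has $j$-th column $b_{k,j}\,\underline{\A}\tm{j}$, so stacking the $M$ blocks makes the $j$-th column of $\mathbf{J}$ equal to $\mathbf{b}_j\otimes\underline{\A}\tm{j}$, where $\mathbf{b}_j=(b_{1,j},\dots,b_{M,j})^{T}$ and $\underline{\A}\tm{j}\in\mathcal{C}^{Q}$ is the $j$-th column of $\A$. Thus $\det(\mathbf{J})\neq0$ is equivalent to linear independence of the $MQ$ Kronecker columns $\mathbf{b}_j\otimes\underline{\A}\tm{j}$, and reading off the block-$k$ rows, any relation $\sum_j c_j(\mathbf{b}_j\otimes\underline{\A}\tm{j})=0$ forces $\sum_j (c_j b_{k,j})\,\underline{\A}\tm{j}=0$ for every $k$. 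The general-position hypothesis is precisely a statement controlling which relations among the $\underline{\A}\tm{j}$ are possible.

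To build the witnessing input I would compute the weights explicitly. Because the training gives $x_m\tm{MQ+k}=\delta_{m,k}$, the column $\underline{\R}\tm{MQ+k}$ has only its $k$-th $Q$-block nonzero, equal to $\underline{\A}\tm{MQ+k}$, whence $\underline{\B}\tm{MQ+k}=\mathbf{C}_{\R}^{-1}\bigl(e_k\otimes\underline{\A}\tm{MQ+k}\bigr)$; since $\mathbf{C}_{\R}$ is block-built from $\A\tm{1:MQ}\diag(\mathbf{x}_m\tm{1:MQ})$, Cramer's rule writes each $b_{k,j}$ as a ratio of $Q\times Q$ minors of $\A\tm{1:M(Q+1)}$ (with the free entries $\mathbf{x}_m\tm{1:MQ}$ entering only as nonzero scalings). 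The hypothesis that every $Q$ columns of $\A\tm{1:M(Q+1)}$ are independent makes all such minors nonzero. I would first verify the base case $M=1$ directly, where $\mathbf{J}=\A\tm{1:Q}\diag(\underline{\B}\tm{Q+1})$: here $\A\tm{1:Q}$ is invertible and each diagonal entry is exactly a ratio of $Q\times Q$ minors of $\A\tm{1:Q+1}$, all nonzero, so $\det(\mathbf{J})\neq0$ for generic $\mathbf{x}_1\tm{1:Q}$. For general $M$ I would then try to choose the free entries $\mathbf{x}_m\tm{1:MQ}$ so that the supports of the weight vectors $\underline{\B}\tm{MQ+k}$ separate, reducing $\mathbf{J}$, after a column permutation, to a block-diagonal matrix whose $k$-th diagonal block is an invertible $Q\times Q$ submatrix of $\A\tm{1:M(Q+1)}$ times a nonsingular diagonal.

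The \textbf{main obstacle} is this last step. The $MQ$ vectors $\underline{\A}\tm{j}$ live in the $Q$-dimensional space $\mathcal{C}^{Q}$, so they are heavily dependent, and the weights $b_{k,j}$ are not free parameters but coupled through $\mathbf{C}_{\R}^{-1}$; hence I cannot simply prescribe a convenient support pattern for the $\mathbf{b}_j$. The real work is to show that general position of $Q$-subsets of columns is strong enough to force some admissible message to make the $M$ block-equations $\sum_j (c_j b_{k,j})\,\underline{\A}\tm{j}=0$ share only the trivial common solution --- equivalently, to identify in the cleared polynomial $(\det\mathbf{C}_{\R})^{MQ}\det(\mathbf{J})$ a surviving monomial whose coefficient is, up to sign, a product of $Q\times Q$ minors taken from distinct column-blocks of $\A\tm{1:M(Q+1)}$, all nonvanishing by hypothesis. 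Once such a term is isolated, $\det(\mathbf{J})\not\equiv0$ follows, and the measure-zero conclusion of the first paragraph finishes the proof; everything outside this combinatorial identification is routine.
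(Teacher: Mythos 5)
Your overall architecture is sound and matches the paper's: clear the $\det(\C_{\R})$ denominators, note that $\det(\mathbf{J})\det(\C_{\R})^{MQ}$ is a polynomial in the message entries, invoke the fact that a not-identically-zero polynomial vanishes only on a Lebesgue-null set, and reduce everything to exhibiting one input with $\det(\mathbf{J})\neq 0$. Your Kronecker description of the columns of $\mathbf{J}$ and your $M=1$ check are also correct. But the proposal stops exactly where the theorem lives: for general $M$ you never produce the witness (equivalently, never isolate a monomial with nonvanishing coefficient), and you yourself flag this as the unresolved ``main obstacle.'' That identification is not a routine leftover --- it is the entire content of the result, and it is precisely what the paper's proof supplies: the paper exhibits the monomial $\bigl[\prod_{m=1}^{M}\prod_{t=(m-1)Q+1}^{mQ}x_m\tm{t}\bigr]^{MQ-1}$ in the numerator and identifies its coefficient as a product of $Q\times Q$ minors of $\A\tm{1:M(Q+1)}$, each nonzero under the hypothesis. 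As submitted, your argument proves nothing beyond the (easy) reduction.

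The frustrating part is that the route you abandoned does work, in a few lines, and your stated reason for abandoning it is backwards: the coupling of the weights through $\C_{\R}^{-1}$ is not an obstruction to prescribing supports, it is the mechanism that lets you prescribe them. Take the witness in which $\mathbf{x}_m\tm{1:MQ}$ is supported on indices $(m-1)Q+1,\dots,mQ$ with nonzero entries there. Then $\C_{\R}$ is block diagonal with $m$th block $\A\tm{(m-1)Q+1:mQ}\diag(\mathbf{x}_m\tm{(m-1)Q+1:mQ})$, invertible by hypothesis, so $\C_{\R}^{-1}$ is block diagonal as well. Since the training makes $\underline{\R}\tm{MQ+k}$ equal to $\underline{\A}\tm{MQ+k}$ placed in the $k$th block and zero elsewhere, $\underline{\B}\tm{MQ+k}=\C_{\R}^{-1}\underline{\R}\tm{MQ+k}$ is supported on block $k$, with nonzero part $\beta_k=\diag(\mathbf{x}_k\tm{(k-1)Q+1:kQ})^{-1}\,\A\tm{(k-1)Q+1:kQ}^{-1}\,\underline{\A}\tm{MQ+k}$. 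Consequently $\mathbf{J}$ itself is block diagonal with $k$th block $\A\tm{(k-1)Q+1:kQ}\diag(\beta_k)$, so $\det(\mathbf{J})=\prod_{k}\det\bigl(\A\tm{(k-1)Q+1:kQ}\bigr)\prod_{j=1}^{Q}(\beta_k)_j$. By Cramer's rule, $(\beta_k)_j$ is, up to the nonzero factor $x_k\tm{(k-1)Q+j}$, a ratio of two $Q\times Q$ minors of $\A\tm{1:M(Q+1)}$: the columns $\{(k-1)Q+i: i\neq j\}\cup\{MQ+k\}$ over the columns $\{(k-1)Q+1,\dots,kQ\}$. Both are nonzero by hypothesis, so $\det(\mathbf{J})\neq 0$ at the witness, the numerator polynomial is not identically zero, and your first paragraph finishes the proof. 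Note that this witness is the paper's argument in different clothing: evaluating the numerator at block-supported inputs kills every monomial except the one the paper exhibits, which is why its coefficient comes out as a product of exactly these minors.
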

	
\begin{proof}
	We give the sketch of the proof here. It can be proved that $\det(\mathbf{J})$ is the ratio of two multivariate polynomial with variables in $\widetilde{\X}$. 
	
	Considering the fact that any set of zeros of a polynomial has measure zero as long as the polynomial is not identically equal to zero.
	 Thus, $\det(\mathbf{J})\neq 0$ with probability one \emph{iff} there is at least one term in the numerator with nonzero coefficient. 
	It can be proved that there is a term in the numerator of $\det(\mathbf{J})$ with variables $\left[ \prod_{m=1}^{M}\prod_{t=(m-1)Q+1}^{mQ}{x_m\tm{t}} \right]^{MQ-1}$ whose coefficient is
	
	\[\prod_{m=1}^{M}\left[{\det(\A\tm{(m-1)Q+1:mQ})}\right]^{M-1}{\det(\Lambda_m)},\]
	where $\Lambda_m$ is defined as:
	\[\Lambda_m =\A\tm{MQ+m,\{(m-1)Q+1:mQ\}\backslash \{(m-1)Q+m\}}.\]
	
	Determinant of $\Lambda_m$ is equal to the determinant of a matrix which is similar to $\A\tm{(m-1)Q+1:mQ}$, and its $m$th column is replaced with $\underline{\A}\tm{MQ+m}$.
	Clearly, a sufficient condition to have a polynomial which is not identically equal to zero is that all choices of $Q$ columns of matrix $\A\tm{1:M(Q+1)}$ are linearly independent. 
	
\end{proof}

\section{$n_r < Q $}
\label{sec:nrLQ}
In this regime the effective number of transmit antennas is one. This reduces the problem to SIMO case. The degrees of freedom in these systems is $\min(1-1/T,n_r(1-Q/T))$ as given in \cite{Riegler} and \cite{Riegler2}.

\section{Achievable Degrees of Freedom}
\begin{theorem}
	For the correlatively changing fading channels with correlation matrix $\mathbf{K}_{\mathbf{H}}={\A}^{\dagger} \A$, with  $n_r\geq Q$, define $M^*$ as
\[M^*=\min(n_t,\lfloor{\frac{n_r}{Q}}\rfloor,\lfloor\frac{T}{Q+1}\rfloor).\]

If there are $M^*(Q+1) $ columns in matrix $\A$ for which any choice of $Q$ columns are linearly independent then the following number of degrees of freedom per transmitted symbol is achievable:
\[\mbox{D}= M^*(1-M^*/T).\]

\end{theorem}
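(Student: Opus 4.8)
The plan is to realize the hypotheses of Theorem~\ref{thm:DOF} for the nonlinear mapping $\X\mapsto\F_{\A}(\X)$, taking the input signal space to be the manifold of $M^*$-dimensional subspaces of $\mathcal{C}^T$ and reading off the achievable $D$ from its dimension. First I would set $M=M^*$. Because $M^*=\min(n_t,\lfloor n_r/Q\rfloor,\lfloor T/(Q+1)\rfloor)$, the first three recovery conditions hold automatically: $M^*\le n_t$; $M^*Q\le n_r$ (from $M^*\le\lfloor n_r/Q\rfloor$), which forces $\Omega_{\Y}=\F_{\A}(\X)$ so that the output subspace is recoverable from the noiseless received signals with probability one; and $M^*(Q+1)\le T$ (from $M^*\le\lfloor T/(Q+1)\rfloor$), so the block is long enough to carry the training columns used in the nonlinear phase.

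Next I would discharge the fourth condition, $\det(\mathbf{J})\neq 0$ with probability one. The hypothesis supplies $M^*(Q+1)$ columns of $\A$ any $Q$ of which are linearly independent; since $M^*(Q+1)\le T$ these columns fit inside the block, and I would relabel the time axis so that they become the first $M^*(Q+1)$ columns $\A\tm{1:M^*(Q+1)}$ entering the definition of $\mathbf{J}$. Under this identification the hypothesis is exactly the sufficient condition of Theorem~\ref{thm:detJ}, which then yields $\det(\mathbf{J})\neq 0$ for Lebesgue-almost-every input subspace.

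With all four conditions in force, I would invoke the recovery algorithm verbatim: reconstruct $\F_{\A}(\X)$ as an $M^*Q$-dimensional subspace, form its reduced row echelon representation $\B$, invert $\mathbf{J}$ as in~\eqref{eq:MIMO-nonlinear} to recover $\widetilde{\X}\tm{1:M^*Q}$, and finish with the linear phase. This is an explicit inverse of $\X\mapsto\F_{\A}(\X)$ valid off a measure-zero set, so the mapping is bijective with measure one over the input neighborhood. For the dimension count, the input neighborhood is a full-measure subset of the Grassmannian of $M^*$-dimensional subspaces of $\mathcal{C}^T$, of complex dimension $M^*(T-M^*)$; in RREF coordinates the training constraint $\widetilde{\X}\tm{M^*Q+1:M^*(Q+1)}=\I_{M^*}$ is merely the gauge choice selecting the pivot columns, so it carries no information and the $M^*(T-M^*)$ free entries of $\B$ are precisely the coordinates reachable only by varying $\X$. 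Because the recovery algorithm furnishes a continuous inverse, the image is also $M^*(T-M^*)$-dimensional, matching the requirement of Theorem~\ref{thm:DOF}.

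Applying Theorem~\ref{thm:DOF} with $D=M^*(T-M^*)$ then gives $M^*(T-M^*)$ degrees of freedom over a block of length $T$; normalizing by the block length yields $M^*(1-M^*/T)$ degrees of freedom per transmitted symbol, as claimed. I expect the only genuine obstacle to be bookkeeping rather than new analysis: one must verify that the ``probability one'' of Theorem~\ref{thm:detJ} is taken with respect to the same Lebesgue measure on the input manifold that Theorem~\ref{thm:DOF} demands, and that fixing the training columns truly consumes no degrees of freedom, so that $D$ equals the full Grassmann dimension $M^*(T-M^*)$ and not a smaller number.
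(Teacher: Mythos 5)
Your proposal is correct and follows essentially the same route as the paper: choose $M=M^*$ so that the first three recovery conditions hold by construction, invoke Theorem~\ref{thm:detJ} (after relabeling time so the hypothesized $M^*(Q+1)$ columns come first) to get $\det(\mathbf{J})\neq 0$ with probability one, use the two-phase recovery algorithm as the explicit almost-everywhere inverse establishing bijectivity, and then apply Theorem~\ref{thm:DOF} with $D=M^*(T-M^*)$ (the Grassmann dimension, with the training columns serving only as a gauge/pivot choice) to obtain $M^*(1-M^*/T)$ degrees of freedom per symbol. This is precisely the assembly the paper intends, so no further comparison is needed.
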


\section{Conclusion}

The formalized dimension counting argument is used to prove the achievability of degrees of freedom in MIMO correlatively changing fading channels.  This argument considers the mapping from the transmitted signals to the noiseless received signals. 

In the correlatively changing channel, this is a nonlinear mapping over manifolds of different dimensions. We studied this mapping and proposed a recovery algorithm as its inverse. We also proved that under generic condition the recovery algorithm is successful with probability one.

The proposed recovery algorithm exploits the nonlinear behavior of the mapping to achieve extra number of degrees of freedom. This nonlinear procedure is different by nature from the classic training approaches in noncoherent systems. We call the number of degrees of freedom achieved by this method the \emph{nonlinear degrees of freedom}.



%

\end{document}